\documentclass[conference]{IEEEtran}
\IEEEoverridecommandlockouts

\usepackage[letterpaper, top=0.99in, left=0.7in, right=0.7in, bottom=0.78in]{geometry}

\usepackage{cite}
\usepackage{algorithmic}
\usepackage{graphicx}
\usepackage{textcomp}
\usepackage{xcolor}
\def\BibTeX{{\rm B\kern-.05em{\sc i\kern-.025em b}\kern-.08em
    T\kern-.1667em\lower.7ex\hbox{E}\kern-.125emX}}

\usepackage{hyperref}
\hypersetup{pdftex,colorlinks=true,allcolors=blue} 

\usepackage{amsmath, amsthm, amssymb}
\usepackage{verbatim}
\usepackage{graphicx}
\usepackage{array}
\usepackage{mathrsfs}
\usepackage{comment}
\usepackage{enumitem} 
\usepackage{lmodern}

\newtheorem{theorem}{Theorem}
\newtheorem{lemma}{Lemma}

\newcommand{\R}{\mathbb{R}}
\newcommand{\E}{\mathbb{E}}

\def\sA{{\mathsf A}}

\def\sS{{\mathsf S}}

\def\sW{{\mathsf W}}

\def\rd{{\rm d}}
\def\PP{{\mathbb P}}

\def\deq{\triangleq}

\begin{document}

\title{Minimum Distortion Quantization with \\ Specified Output Distribution
\thanks{
 xuaolin@gmail.com}
}

\author{Aolin Xu}
\maketitle

\begin{abstract}
We derive the optimal quantizer of a real-valued random variable $W$ with distribution $P_W$ such that 1) the distribution of the quantization output $X$ that can take $k$ values follows any specified distribution $P_X$ over $\{1,\ldots,k\}$, and 2) the minimum mean squared error (MMSE) of estimating $W$ from $X$ is minimized.
It is shown that the optimal quantizer takes the form $X=\sigma\big(F_{\sigma^{-1}(X)}^{-1}(F_W(W))\big)$, where $\sigma$ is the optimal permutation of $\{1,\ldots,k\}$ among all permutations to minimize the MMSE, and $F$ is the cumulative distribution function. When $P_W$ is uniform over an interval or $P_X$ is uniform over $\{1,\ldots,k\}$, the quantizer takes a simple form $X=F_{X}^{-1}(F_W(W))$.
The concept of majorization plays a key role in the optimality proof. 
Specifying the output distribution is useful for designing quantizers with explicitly controlled output entropy, maximized mutual information between input and output, tailored output distribution to match channel input requirements for communication, and data anonymization.
\end{abstract}

\section{Introduction}

Quantization is a fundamental operation in signal processing, digital communications, data analysis, and artificial intelligence: a continuous-valued source is mapped to a finite alphabet for storage, transmission, or downstream inference. Classical quantizer design focuses on selecting quantization regions and reconstruction points to minimize a distortion metric, most commonly the mean squared error, leading to the well known iterative Lloyd-Max algorithm \cite{Lloyd1982, Max1960} and its variants \cite{quant_1998} with local optimality guarantee. In many applications, however, distortion is not the sole design criterion. The distribution of the quantization indices can be as important as the distortion itself because it directly determines 1) the compressibility of the output index stream \cite{Gallager_2008}, or on the other hand, its informativeness about the source, 2) the feasibility or the effectiveness of transmitting the indices over channels with input constraints \cite{MarcusRothSiegel}, and 3) the statistical properties of the clustered data in data analysis or those of the released data in privacy-aware scenarios \cite{REBOLLOMONEDERO2013185, dist_prsv_k_anonym}.

Quantizer design beyond solely minimizing distortion has been studied over decades. For example, low entropy quantization \cite[Chap. 3]{Gallager_2008} aims to constrain the output entropy; on the other hand, the output entropy can also be maximized \cite{quant_max_ent}; input distribution preserving through dithering has been proposed in \cite{dpq2010}; information maximizing quantizers have been proposed for communication links \cite{boa12, Zeitler12} and for estimating correlated variables \cite{info_distill_quantiz}.
Specifying quantization output distribution is also connected to optimal transport, e.g.\ optimal transport may be used to show the existence of randomized quantizer with specified output distribution \cite{random_OT_quant}, and semi-discrete optimal transport can be used to achieve a specified output distribution when the reconstruction points are prespecified and not optimized, which leads to the Laguerre or power cell type of solutions \cite{power_cell_cluster, var_disc_ot}. In recent years, huge amount of effort has also been spent on designing quantizers to slim down large neural networks and language models while maintaining their inference performance \cite{qnn_survey, eval_q_LLM}. 

This paper studies deterministic scalar quantization with an explicit output distribution requirement. Given a real-valued random variable $W$ with known distribution $P_W$ and a target probability distribution $P_X$ on a generic $k$-point alphabet $\{1, \ldots,k\}$, we seek a globally optimal quantizer $X=f(W)$ such that $X\sim P_X$ while the minimum mean squared error (MMSE) of estimating $W$ from the quantizer output $X$ is minimized. This formulation frees the design from any prespecified reconstruction points: the quantizer is free to choose its induced conditional means and hence its optimal reconstruction points, subject only to matching the prescribed output distribution. 
This is the key difference from the optimal transport based quantizer design where the reconstruction points are fixed \cite{power_cell_cluster} hence only leading to locally optimal quantizers to the problem considered here; this freedom also makes the problem studied in this work considerably more challenging.
The resulting problem can be viewed as a distortion-optimal way to shape the quantization output distribution, thereby enabling quantizer designs that are simultaneously estimation accurate, entropy or information controllable, and channel or downstream task compatible.
As another perspective, specifying the output distribution may be viewed as a form of regularization of distortion minimization, which enables us to derive the optimal quantizer. Without such a regularization, solely minimizing distortion falls back to the classical quantizer design problem, which remains generally intractable.

\section{Problem statement}
Let $W$ be a continuous random variable with a distribution $P_W$ over $\R$ that can be described by a continuous cumulative distribution function (CDF) $F_W$ or a probability density function (PDF) $p_W$.
The goal is to find a quantizer or simply a function $f$ that maps $W$ to $X=f(W)\in\{1,\ldots,k\}$ such that 
\begin{itemize}[leftmargin=*]
    \item the distribution of $X$ follows any specified distribution $P_X$ over $\{1,\ldots,k\}$, and
    \item the MMSE of estimating $W$ from X is minimized among all quantizers satisfying the above condition.
\end{itemize}
Formally, given $P_W$ and $P_X$, this function optimization problem can be stated as
\begin{align}
    \min_{\substack{f:\,\R \rightarrow \{1,\ldots,k\} \\ \text{s.t. } f(W)\sim P_X}} \text{mmse}(W|f(W)) \label{eq:opt1}
\end{align}
where 
\begin{align}
\text{mmse}(W|f(W)) \deq \min_{g:\,\{1,\ldots,k\}\rightarrow\R}\E[(W-g(f(W)))^2]    
\end{align}
for any $f$.
Given the number of values $k$ that can be taken by the quantization output, restricting the range of the quantizer to be 
$\{1,\ldots,k\}$ is of no loss of generality, as there is a one-to-one mapping between $\{1,\ldots,k\}$ and any finite set of cardinality $k$, and such a mapping does not change the MMSE of estimating $W$ from the quantization output \cite[Lemma~1]{MER2022}.

\section{Derivation of optimal quantizer}
\subsection{Problem conversion}
For any quantizer $f:\R\rightarrow\{1,\ldots,k\}$ that maps $W$ to $X=f(W)$, using basic properties of MMSE and conditional expectation, it can be shown that
\begin{align}
    \text{mmse}(W|X) = \E[W^2]- \E[\E[W|X]^2] .\label{eq:mmse_decomp}
\end{align}
As $\E[W^2]$ is determined by $P_W$, whenever $\E[W^2] < \infty$, the optimization problem in \eqref{eq:opt1} is equivalent to 
\begin{align}
    \max_{\substack{f:\,\R \rightarrow \{1,\ldots,k\} \\ \text{s.t. } f(W)\sim P_X}} \E[\E[W|f(W)]^2] . \label{eq:max_E[m^2]}
\end{align}
Under the assumption that $W$ is a continuous random variable, $F_W$ is invertible and we can define
\begin{align}
    Q_W(u) \deq F_W^{-1}(u) , \quad u\in[0,1] .
\end{align}
With a random variable $U$ uniformly distributed over $[0,1]$, the tuple $(Q_W(U), f(Q_W(U))$ has the same joint distribution as $(W, f(W))$ for any $f:\R\rightarrow\{1,\ldots,k\}$, hence
\begin{align}
 \E[\E[W|f(W)]^2] = \E[\E[Q_W(U)|f(Q_W(U))]^2] .
\end{align}
For notational simplicity, we represent $P_X$ as
\begin{align}
    p_i \deq P_X(i) , \quad i=1,\ldots,k .
\end{align}
We have the following lemma.
\begin{lemma}\label{lm:convert_f_S}
The maximum of \eqref{eq:max_E[m^2]} satisfies
\begin{align}
& \max_{\substack{f:\,\R \rightarrow \{1,\ldots,k\} 
\\ 
\textnormal{s.t. } f(W)\sim P_X}} \E[\E[W|f(W)]^2] = 
\nonumber \\
& 
\max_{\substack{\sS_1,\ldots,\sS_k\subset[0,1] 
\\
\textnormal{s.t. } 
\sS_i\cap \sS_j = \emptyset, \, \cup_{i=1}^k \sS_i = [0,1] \\
P_U(\sS_i)=p_i } }
\sum_{i=1}^k p_i \, \left(\frac{1}{p_i}\int_{\sS_i}Q_W(u) \rd u \right)^2 . \label{eq:max_S}
\end{align}
Moreover, an $f^*$ satisfying the constraint and achieving the maximum on the LHS of \eqref{eq:max_S} can be obtained as
\begin{align}
    f^*(w) = i \quad\textnormal{if $w\in Q_W(\sS^*_i)$} , \quad w\in\R \label{eq:f*_def}
\end{align}
where $\sS^*_1,\ldots,\sS^*_k$ are subsets of $[0,1]$ satisfying the constraints and achieving the maximum on the RHS of \eqref{eq:max_S}, and $Q_W(\sS^*_i) \deq \{Q_W(u): u\in \sS^*_i\}$.

\begin{proof}
For any $f:\R\rightarrow\{1,\ldots,k\}$ satisfying $f(W) \sim P_X$,
\begin{align}
& \E[\E[W|f(W)]^2] \nonumber \\
=& \E[\E[Q_W(U)|f(Q_W(U))]^2]  \\
=& 
\sum_{i=1}^k \PP[f(Q_W(U))=i] \, \E[Q_W(U)|f(Q_W(U))=i]^2 \\
=& 
\sum_{i=1}^k p_i \, \left(\frac{1}{p_i}\int_{\sS_i}Q_W(u) \rd u \right)^2 \label{eq:E_E[W|f(W)]^2}
\end{align}
where
\begin{align}
    \sS_i \deq \{u \in [0,1]: f(Q_W(u))=i\} . \label{eq:S_def}
\end{align}
As $f$ induces an ordered partition $(\sW_1,\ldots,\sW_k)$ of $\R$, which may also be called a labeled partition,
with 
\begin{align}
\sW_i \deq \{w\in\R: f(w)=i\}
\end{align} 
and $(F_W, Q_W)$ is a pair of bijections between $\R$ and $[0,1]$, the $k$-tuple of subsets $(\sS_1,\ldots,\sS_k)$ of $[0,1]$ defined in \eqref{eq:S_def} constitute an ordered partition of $[0,1]$, satisfying
$\sS_i = F_W(\sW_i)$, $\sS_i\cap \sS_j = \emptyset$, $\cup_{i=1}^k \sS_i = [0,1]$ and $P_U(\sS_i)=p_i$.
This proves that
\begin{align}
& \max_{\substack{f:\,\R \rightarrow \{1,\ldots,k\} 
\\ 
\textnormal{s.t. } f(W)\sim P_X}} \E[\E[W|f(W)]^2] 
\le \nonumber \\
& 
\max_{\substack{\sS_1,\ldots,\sS_k\subset[0,1] 
\\
\textnormal{s.t. } \sS_i\cap \sS_j = \emptyset, \, \cup_{i=1}^k \sS_i = [0,1], \\
P_U(\sS_i)=p_i}}
\sum_{i=1}^k p_i \, \left(\frac{1}{p_i}\int_{\sS_i}Q_W(u) \rd u \right)^2 . \label{eq:max_f_ineq}
\end{align}
Next is to show that the equality in \eqref{eq:max_f_ineq} can be achieved by $f^*$ defined in \eqref{eq:f*_def}.
To this end, let $\sS^*_1,\ldots,\sS^*_k$ be subsets of $[0,1]$ satisfying the constraints and achieving the maximum on the RHS of \eqref{eq:max_f_ineq}.
For $f^*$ defined as
\begin{align}
    f^*(w) = i \quad\textnormal{if $w\in Q_W(\sS^*_i)$} , \quad w\in\R , \nonumber
\end{align}
it maps $\R$ to $\{1,\ldots,k\}$ and $\PP[f(W)=i]=\PP[W\in Q_W(\sS^*_i)] = \PP[Q_W(U)\in Q_W(\sS^*_i)] = \PP[U\in\sS^*_i] = p_i$, hence $f^*$ satisfies the constraint on the LHS of \eqref{eq:max_f_ineq}.
Moreover, for this $f^*$,
\begin{align}
   \{u: f^*(Q_W(u))=i\} 
   = \{u: Q_W(u) \in Q_W(\sS^*_i)\} 
  =\sS^*_i .
\end{align}
Together with \eqref{eq:E_E[W|f(W)]^2}, it follows that $f^*$ achieves the RHS of \eqref{eq:max_f_ineq}. This proves the lemma.
\end{proof}

\end{lemma}

Lemma~\ref{lm:convert_f_S} converts the quantizer optimization problem in \eqref{eq:max_E[m^2]}
to an ordered partition optimization problem of the unit interval on the RHS of \eqref{eq:max_S}.
Its proof also shows a one-to-one correspondence between any admissible quantizer $f$ satisfying the constraint on the LHS of \eqref{eq:max_S} and an admissible ordered partition $(\sS_1,\ldots,\sS_k)$ satisfying the constraints on the RHS of \eqref{eq:max_S} through \eqref{eq:S_def} and $\eqref{eq:f*_def}$.

\subsection{Solution to converted problem}
The converted optimization problem through Lemma~\ref{lm:convert_f_S} can be restated as
\begin{align}
\max_{\substack{\sS_1,\ldots,\sS_k\subset[0,1] 
\\
\textnormal{s.t. } \sS_i\cap \sS_j = \emptyset, \, \cup_{i=1}^k \sS_i = [0,1], \\
P_U(\sS_i)=p_i}}
\sum_{i=1}^k p_i \, m_i^2 \label{eq:converted_max_Em^2}
\end{align}
where
\begin{align}
    m_i &\deq \frac{1}{p_i}\int_{\sS_i}Q_W(u) \rd u 
    = \E[Q_W(U)|U\in\sS_i].
\end{align}
For the $f:\R\rightarrow\{1,\ldots,k\}$ induced by an admissible ordered partition $(\sS_1,\ldots,\sS_k)$ of $[0,1]$ through \eqref{eq:f*_def}, 
\begin{align}
    m_i = \E[W|f(W)=i] .
\end{align}
The set of all admissible ordered partitions 
\begin{align}
\{(\sS_1,\ldots,\sS_k): 
& \, \sS_i\subset [0,1],  \sS_i\cap \sS_j = \emptyset,  \nonumber \\
&  \cup_{i=1}^k \sS_i = [0,1], P_U(\sS_i)=p_i\} \label{eq:def_set_ord_part}
\end{align}
can be divided into $k!$ categories according to the order of the values of $m_1, \ldots, m_k \in \R$ induced by each $(\sS_1,\ldots,\sS_k)$.
Specifically, each category can be represented by a permutation $\sigma$ of $\{1,\ldots,k\}$, such that 
\begin{align}
    m_{\sigma(1)} \le \ldots \le m_{\sigma(k)} \label{eq:order_m}
\end{align}
for each $(\sS_1,\ldots,\sS_k)$ in that category.
We have the following lemma.
\begin{lemma}\label{lm:major}
Among the category of the admissible ordered partitions in \eqref{eq:def_set_ord_part} resulting in $m_{\sigma(1)} \le \ldots \le m_{\sigma(k)}$, the one defined by contiguous intervals
\begin{align}
    \sS^*_{\sigma(i)} \deq [q_{i-1}, q_i)  \quad i=1,\ldots,k \label{eq:def_S*_q}
\end{align}
with
\begin{align}
    q_i \deq q_{i-1} + p_{\sigma(i)}
\end{align}
and $q_0 \deq 0$ satisfies
\begin{align}
    \sum_{i=1}^j p_{\sigma(i)} m^*_{\sigma(i)}
    \le
    \sum_{i=1}^j p_{\sigma(i)} m_{\sigma(i)} \quad j=1,\ldots, k-1 \label{eq:major_j}
\end{align}
and 
\begin{align}
    \sum_{i=1}^k p_{\sigma(i)} m^*_{\sigma(i)}
    =
    \sum_{i=1}^k p_{\sigma(i)} m_{\sigma(i)} \label{eq:major_k}
\end{align}
for all other ordered partitions in this category.
\end{lemma}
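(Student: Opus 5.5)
Rewrite each partial sum as an integral over a union of sets and compare it with the integral over the leftmost interval of the same Lebesgue measure, using that $Q_W$ is nondecreasing. Fix $\sigma$ and any admissible ordered partition $(\sS_1,\ldots,\sS_k)$ in the category with $m_{\sigma(1)}\le\ldots\le m_{\sigma(k)}$. Since $p_i m_i=\int_{\sS_i}Q_W(u)\,\rd u$, for each $j$ we have
\begin{align}
\sum_{i=1}^j p_{\sigma(i)} m_{\sigma(i)} = \int_{\mathsf{A}_j} Q_W(u)\,\rd u, \quad \mathsf{A}_j\deq \cup_{i=1}^j \sS_{\sigma(i)} ,
\end{align}
where $\mathsf{A}_j$ has Lebesgue measure $q_j=\sum_{i=1}^j p_{\sigma(i)}$ because the sets are disjoint. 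For the contiguous partition in \eqref{eq:def_S*_q}, the corresponding union is exactly $[0,q_j)$, so
\begin{align}
\sum_{i=1}^j p_{\sigma(i)} m^*_{\sigma(i)}=\int_0^{q_j} Q_W(u)\,\rd u .
\end{align}

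The key step is a rearrangement inequality. Because $Q_W=F_W^{-1}$ is nondecreasing, for any measurable $\mathsf{A}\subset[0,1]$ of measure $q$ we have $\int_{\mathsf{A}}Q_W\ge\int_0^q Q_W$. To prove it, write $\mathsf{A}=(\mathsf{A}\cap[0,q))\cup(\mathsf{A}\cap[q,1])$ and note that $\mathsf{A}\cap[q,1]$ and $[0,q)\setminus\mathsf{A}$ have equal measure. On the first set $Q_W\ge Q_W(q)$, and on the second $Q_W\le Q_W(q)$. Hence
\begin{align}
\int_{\mathsf{A}}Q_W-\int_0^qQ_W=\int_{\mathsf{A}\cap[q,1]}Q_W-\int_{[0,q)\setminus \mathsf{A}}Q_W\ge 0 .
\end{align}
Applying this with $\mathsf{A}=\mathsf{A}_j$ and $q=q_j$ gives \eqref{eq:major_j} for $j=1,\ldots,k-1$. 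For $j=k$, both unions are $[0,1]$ up to null sets, so both sides equal $\E[W]$, which gives \eqref{eq:major_k}. Integrability of $Q_W$ follows from the standing assumption $\E[W^2]<\infty$.

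The main obstacle is whether the contiguous partition actually belongs to the stated category, i.e.\ whether $m^*_{\sigma(1)}\le\ldots\le m^*_{\sigma(k)}$. This follows directly from monotonicity: every point of $[q_{i-1},q_i)$ lies to the left of every point of $[q_i,q_{i+1})$. So $Q_W$ on the first interval is bounded above by $Q_W(q_i)$ and on the second bounded below by it, and therefore $m^*_{\sigma(i)}\le m^*_{\sigma(i+1)}$. Measure-zero boundary issues (half-open intervals, and the endpoint $1$ not covered by $[q_{k-1},q_k)$) do not affect the integrals, so I would only note them briefly. The ordering assumption on the other partition is not needed for the inequalities themselves. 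It only serves to identify the vectors $(m_{\sigma(i)})$ as sorted, so that the conclusion reads as a weighted majorization statement to be used later.
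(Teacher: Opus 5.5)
Your proof is correct and follows essentially the same route as the paper. Both write the partial sums as integrals of the monotone $Q_W$ over $\cup_{i\le j}\sS_{\sigma(i)}$ and over $[0,q_j)$, then compare the two equal-measure pieces of their symmetric difference, which the paper does via $\sup$ of the left piece $\le \inf$ of the right piece and you do via the pivot value $Q_W(q_j)$. Your added remarks on integrability ($\E|W|<\infty$) and on measure-zero endpoints are harmless refinements.
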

\begin{proof}
First, by the construction of $\sS^*_{\sigma(i)}$, the order in \eqref{eq:order_m} is met, as $Q_W(u)$ is increasing.
For $j=1,\ldots,k$, we have
\begin{align}
\sum_{i=1}^j p_{\sigma(i)} m^*_{\sigma(i)}
&= \sum_{i=1}^j \int_{\sS^*_{\sigma(i)}} Q_W(u) \rd u \\
&= \int_{\cup_{i=1}^j \sS^*_{\sigma(i)}} Q_W(u) \rd u
\end{align}
and for any other $(\sS_1,\ldots,\sS_k)$ in this category
\begin{align}
\sum_{i=1}^j p_{\sigma(i)} m_{\sigma(i)}
&= \sum_{i=1}^j \int_{\sS_{\sigma(i)}} Q_W(u) \rd u \\
&= \int_{\cup_{i=1}^j \sS_{\sigma(i)}} Q_W(u) \rd u .
\end{align}
Let $\sA_j \deq \cup_{i=1}^j \sS^*_{\sigma(i)}$ and ${\mathsf B}_j \deq \cup_{i=1}^j \sS_{\sigma(i)}$. Then $P_U(\sA_j) = P_U(\mathsf B_j)$, hence
\begin{align}
    P_U(\sA_j\setminus\mathsf B_j) = P_U(\mathsf B_j \setminus \sA_j) . \label{eq:B-A=A-B}
\end{align}
It follows that
\begin{align}
& \sum_{i=1}^j p_{\sigma(i)} m_{\sigma(i)} 
-
\sum_{i=1}^j p_{\sigma(i)} m^*_{\sigma(i)} \nonumber \\
=& 
\int_{\mathsf B_j} Q_W(u) \rd u 
-
\int_{\sA_j} Q_W(u) \rd u \\
=& \int_{\mathsf B_j\setminus\sA_j} Q_W(u) \rd u 
-
\int_{\sA_j\setminus\mathsf B_j} Q_W(u) \rd u \\
\ge& P_U(\mathsf B_j \setminus \sA_j) \Big( \inf_{u\in \mathsf B_j \setminus \sA_j} Q_W(u) - \sup_{u\in \mathsf A_j \setminus \mathsf B_j} Q_W(u)\Big) \label{eq:major_B-A} \\
\ge & 0  \label{eq:major_B>A}
\end{align}
where \eqref{eq:major_B-A} follows from \eqref{eq:B-A=A-B} and the fact that $Q_W(u)$ is increasing in $u$, and \eqref{eq:major_B>A} follows from the fact that $Q_W(u)$ is increasing in $u$ and 
\begin{align}
    \sup( \sA_j\setminus\mathsf B_j )
\le 
    \inf( \mathsf B_j\setminus\sA_j )
\end{align}
by the construction of $\sS^*_{\sigma(i)}$ for $i=1,\ldots,j$ in \eqref{eq:def_S*_q}.
This proves \eqref{eq:major_j}.
The proof of \eqref{eq:major_k} is straightforward as
\begin{align}
    \sum_{i=1}^k p_{\sigma(i)} m^*_{\sigma(i)}
    =
    \sum_{i=1}^k p_{\sigma(i)} m_{\sigma(i)} 
    = \int_{[0,1]} Q_W(u) \rd u  = \E[W]. \nonumber
\end{align}
\end{proof}
Lemma~\ref{lm:major} shows that $(m^*_{\sigma(1)}, \ldots , m^*_{\sigma(k)})$ induced by the ordered partition $(\sS^*_1,\ldots,\sS^*_k)$ defined in \eqref{eq:def_S*_q} weighted-majorizes $( m_{\sigma(1)}, \ldots , m_{\sigma(k)})$ induced by any other ordered partition $(\sS_1,\ldots,\sS_k)$ in the same category.
The following lemma can be viewed as a weighted majorization inequality, a.k.a.\ weighted Karamata inequality \cite{Fuchs_weighted, Majorization_inequalities_IT}. 
We include a proof for completeness.
\begin{lemma}\label{lm:wkata}
For $m^*_1 \le \ldots \le m^*_k \in \R$, $m_{1} \le \ldots \le m_{k} \in \R$, and $p_1,\ldots, p_k \in \R$, if
\begin{align}
    \sum_{i=1}^j p_{i} m^*_{i}
    \le
    \sum_{i=1}^j p_{i} m_{i} \quad j=1,\ldots, k-1 \label{eq:wkata_j}
\end{align}
and 
\begin{align}
    \sum_{i=1}^k p_{i} m^*_{i}
    =
    \sum_{i=1}^k p_{i} m_{i} \label{eq:wkata_k}
\end{align}
then for any convex function $\varphi: \R \rightarrow \R$,
\begin{align}
    \sum_{i=1}^k p_{i} \varphi(m^*_{i})
    \ge
    \sum_{i=1}^k p_{i} \varphi( m_{i} ) .
\end{align}
\end{lemma}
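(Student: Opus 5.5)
The plan is the classical Abel-summation argument for the weighted Karamata inequality, with the convexity of $\varphi$ entering only through the monotonicity of chord slopes. Nonnegativity of the $p_i$ will not be needed; the two monotone orderings and the partial-sum conditions suffice. Set
\begin{align}
    D_0 \deq 0, \qquad D_j \deq \sum_{i=1}^j p_i (m^*_i - m_i), \quad j=1,\ldots,k .
\end{align}
By \eqref{eq:wkata_j}, $D_j \le 0$ for $j=1,\ldots,k-1$, and by \eqref{eq:wkata_k}, $D_k = 0$.

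First I linearize each term. For each $i$, define the chord slope
\begin{align}
    c_i \deq \frac{\varphi(m^*_i)-\varphi(m_i)}{m^*_i - m_i} \ \text{ if } m^*_i\neq m_i, \qquad c_i \deq \varphi'_+(m_i) \ \text{ if } m^*_i = m_i,
\end{align}
where $\varphi'_+$ is the right derivative, which exists and is finite because $\varphi$ is convex on $\R$. In both cases $\varphi(m^*_i)-\varphi(m_i) = c_i (m^*_i - m_i)$. Using $p_i(m^*_i - m_i) = D_i - D_{i-1}$ and summation by parts, I will get
\begin{align}
    \sum_{i=1}^k p_i\big(\varphi(m^*_i)-\varphi(m_i)\big)
    &= \sum_{i=1}^k c_i (D_i - D_{i-1}) \nonumber \\
    &= c_k D_k + \sum_{j=1}^{k-1} D_j\,(c_j - c_{j+1}) .
\end{align}
Since $D_k = 0$, the first term vanishes. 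Each summand $D_j (c_j - c_{j+1})$ is a product of two nonpositive numbers once I show $c_j \le c_{j+1}$. The claimed inequality then follows.

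The main step, and the only place convexity and the orderings $m^*_1\le\cdots\le m^*_k$, $m_1\le\cdots\le m_k$ are used, is to show that $c_1 \le \cdots \le c_k$. I will rely on the standard fact that the slope $s(a,b)$ of a convex function between $a$ and $b$ is nondecreasing in each argument, together with the bounds
\begin{align}
    \varphi'_+(\min(a,b)) \le s(a,b) \le \varphi'_-(\max(a,b)) \quad (a\neq b).
\end{align}
Because $m^*_j \le m^*_{j+1}$ and $m_j \le m_{j+1}$, both endpoints move weakly to the right from index $j$ to $j+1$. There are four cases.
\begin{itemize}
    \item If both pairs have distinct entries, applying the monotonicity of $s$ in each argument in turn gives $c_j \le c_{j+1}$.
    \item If $m^*_j = m_j$ and the $(j+1)$-th pair is distinct, both of its entries are at least $m_j$. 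Then $c_{j+1} \ge \varphi'_+(\min(m^*_{j+1},m_{j+1})) \ge \varphi'_+(m_j) = c_j$.
    \item If the $j$-th pair is distinct and $m^*_{j+1}=m_{j+1}$, then $\max(m^*_j,m_j) \le m_{j+1}$. Hence $c_j \le \varphi'_-(\max(m^*_j,m_j)) \le \varphi'_+(m_{j+1}) = c_{j+1}$.
    \item If both pairs are equal, the claim is the monotonicity of $\varphi'_+$.
\end{itemize}
I expect this case analysis to be the only delicate point: the degenerate pairs $m^*_i=m_i$ need a consistent choice of subgradient, and the choice $\varphi'_+$ makes every case go through. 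Once monotonicity is established, the summation-by-parts identity above finishes the proof.
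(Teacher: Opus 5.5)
Your proof is correct and is the same argument as the paper's: chord slopes $c_i$, summation by parts against the partial-sum differences $D_j$, and $c_j \le c_{j+1}$ from convexity, with no sign condition on the $p_i$ used. Your choice $c_i = \varphi'_+(m_i)$ when $m^*_i = m_i$, together with the explicit case check for monotonicity, fills in details the paper skips (its $c_i$ is undefined in that case). The one point to add is in your first case: an intermediate pair such as $(m^*_{j+1}, m_j)$ can be degenerate, but then $(m^*_j, m_{j+1})$ is not, unless all four values coincide, so the two-step monotonicity still goes through.
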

\begin{proof}
For $j=1,\ldots,k$, define
\begin{align}
a_j \deq \sum_{i=1}^j p_i m^*_i, \quad
b_j \deq  \sum_{i=1}^j p_i m_i
\end{align}
and $a_0 = b_0 \deq 0$.
Also, define
\begin{align}
c_i \deq \frac{\varphi(m_i) - \varphi(m^*_i)}{m_i - m^*_i} .
\end{align}
The convexity of $\varphi$ implies that
\begin{align}
    c_i \le c_{i+1} . \label{eq:wkata_c}
\end{align}
We have
\begin{align}
    & \quad \sum_{i=1}^k p_{i} \varphi(m^*_{i})
    -
    \sum_{i=1}^k p_{i} \varphi( m_{i} ) \nonumber \\
    &= \sum_{i=1}^k c_i (a_i - a_{i-1} - b_i + b_{i-1}) \\
    &= \sum_{i=1}^k c_i (a_i - b_i)
    - \sum_{i=1}^k c_i (a_{i-1} - b_{i-1}) \\
    &= c_k(a_k-b_k) + \sum_{i=1}^{k-1}(c_i - c_{i+1})(a_i - b_i) \label{eq:wkata_1}\\ 
    &\ge 0 . \label{eq:wkata_2}
\end{align}
where \eqref{eq:wkata_1} is due to $a_0=b_0=0$, and \eqref{eq:wkata_2} follows from the facts that $a_k-b_k=0$ by \eqref{eq:wkata_k}, $a_i - b_i \le 0$ by \eqref{eq:wkata_j}, and $c_i - c_{i+1} \le 0$ by \eqref{eq:wkata_c}.
\end{proof}
The conventional majorization inequality is a special case of Lemma~\ref{lm:wkata} by taking $p_1=\ldots=p_k=1$ and reversing the indices $(1,\ldots,k)$. 
Together with Lemma~\ref{lm:major}, by taking $\varphi(m)=m^2$, Lemma~\ref{lm:wkata} proves the in-category optimality of the ordered partition $(\sS^*_1,\ldots,\sS^*_k)$ defined in \eqref{eq:def_S*_q}, as stated in the following lemma.
\begin{lemma}\label{lm:in-catagory}
Among the category of the admissible ordered partitions in \eqref{eq:def_set_ord_part} resulting in $m_{\sigma(1)} \le \ldots \le m_{\sigma(k)}$, the one defined by contiguous intervals in \eqref{eq:def_S*_q} achieves the maximum of $\sum_{i=1}^k p_{i} m^{2}_{i}$.
\end{lemma}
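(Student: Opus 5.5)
The plan is to combine Lemma~\ref{lm:major} with Lemma~\ref{lm:wkata}, after relabeling indices through the permutation $\sigma$. Fix $\sigma$ and any admissible ordered partition $(\sS_1,\ldots,\sS_k)$ in the category, so that $m_{\sigma(1)} \le \ldots \le m_{\sigma(k)}$. Let $(\sS^*_1,\ldots,\sS^*_k)$ be the contiguous-interval partition in \eqref{eq:def_S*_q}, with induced means $m^*_1,\ldots,m^*_k$.

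\textbf{Step 1: the contiguous partition belongs to the category.} The interval $\sS^*_{\sigma(i)}=[q_{i-1},q_i)$ has $P_U$-measure $p_{\sigma(i)}$, and the intervals are disjoint and cover $[0,1]$, the endpoint $1$ having measure zero. So the partition is admissible. Since $Q_W$ is nondecreasing and the intervals are ordered left to right, $m^*_{\sigma(1)}\le\ldots\le m^*_{\sigma(k)}$, as already observed at the start of the proof of Lemma~\ref{lm:major}.

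\textbf{Step 2: relabel and apply the two lemmas.} Define $\tilde p_i \deq p_{\sigma(i)}$, $\tilde m^*_i \deq m^*_{\sigma(i)}$ and $\tilde m_i \deq m_{\sigma(i)}$ for $i=1,\ldots,k$. Both sequences $\tilde m^*$ and $\tilde m$ are nondecreasing. By Lemma~\ref{lm:major}, inequalities \eqref{eq:major_j} and \eqref{eq:major_k} are exactly the hypotheses \eqref{eq:wkata_j} and \eqref{eq:wkata_k} of Lemma~\ref{lm:wkata} in the tilde variables. Apply Lemma~\ref{lm:wkata} with the convex function $\varphi(m)=m^2$ to get
\begin{align}
\sum_{i=1}^k \tilde p_i (\tilde m^*_i)^2 \ge \sum_{i=1}^k \tilde p_i \tilde m_i^2 .
\end{align}
Since $\sigma$ is a bijection, undoing the relabeling turns this into $\sum_{i=1}^k p_i (m^*_i)^2 \ge \sum_{i=1}^k p_i m_i^2$. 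As the partition was arbitrary in the category, this gives the claim.

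\textbf{Main obstacle: the equal-means case in Lemma~\ref{lm:wkata}.} The slopes $c_i$ in the proof of Lemma~\ref{lm:wkata} are undefined when $m_i=m^*_i$. There I would set $c_i$ to a one-sided derivative of $\varphi$ at $m_i$; for $\varphi(m)=m^2$ this is simply $c_i=m_i+m^*_i$. The monotonicity $c_i\le c_{i+1}$ then follows directly from $m_i\le m_{i+1}$ and $m^*_i\le m^*_{i+1}$, with no case split. The identity $p_i(\varphi(m^*_i)-\varphi(m_i)) = c_i\,p_i(m^*_i-m_i)$ also holds for every $i$, including when $m_i=m^*_i$.

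A smaller point: the means $m_i$ require $p_i>0$. Any zero-probability labels can be dropped beforehand, since they contribute nothing to the objective.
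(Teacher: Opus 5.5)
Your proposal is correct and follows the paper's argument exactly: the paper obtains this lemma in one line by feeding the $\sigma$-relabeled inequalities of Lemma~\ref{lm:major} into Lemma~\ref{lm:wkata} with $\varphi(m)=m^2$. Your additions do not change that route but fill small gaps the paper leaves implicit: setting $c_i = m_i + m^*_i$ handles the $0/0$ case $m_i = m^*_i$ in the proof of Lemma~\ref{lm:wkata}, and dropping zero-probability labels avoids undefined means.
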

With Lemma~\ref{lm:convert_f_S} and Lemma~\ref{lm:in-catagory} in hand, we can state the main result of this work.
\begin{theorem}\label{th:main}
The optimal quantizer that achieves the minimum in \eqref{eq:opt1} can take the form
\begin{align}
X = \sigma^*\big(F_{\sigma^{*-1}(X)}^{-1}(F_W(W))\big) \label{eq:thm_f}
\end{align}
where $\sigma^*$ is the permutation of $\{1,\ldots,k\}$ that maximizes
\begin{align}
    \sum_{i=1}^k \frac{1}{p_i} \Big(\int_{\sS_i}Q_W(u) \rd u \Big)^2 \label{eq:thm_Em^2}
\end{align}
with $\sS_{\sigma(i)} = [q_{i-1}, q_i)$, $q_i = q_{i-1} + p_{\sigma(i)}$, $i=1,\ldots,k$,
and $q_0 = 0$ among all permutations $\sigma$.
\end{theorem}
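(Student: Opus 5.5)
The plan is to chain the reductions already established and then read off the explicit form of the quantizer. By \eqref{eq:mmse_decomp}, minimizing \eqref{eq:opt1} is equivalent to maximizing \eqref{eq:max_E[m^2]}. Lemma~\ref{lm:convert_f_S} then turns this into the partition problem \eqref{eq:converted_max_Em^2}. It also says that an optimal partition $(\sS^*_1,\ldots,\sS^*_k)$ yields an optimal quantizer through \eqref{eq:f*_def}.

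The main step is to solve \eqref{eq:converted_max_Em^2} globally. Every admissible ordered partition falls into at least one of the $k!$ categories indexed by a permutation $\sigma$: sort the induced $m_i$'s and break ties arbitrarily. Thus the feasible set is the finite union of these categories.

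Within the category of $\sigma$, Lemma~\ref{lm:in-catagory} (which combines Lemma~\ref{lm:major} with Lemma~\ref{lm:wkata} at $\varphi(m)=m^2$, applied to the reindexed sequences $m_{\sigma(i)}$ with weights $p_{\sigma(i)}$) gives the maximizer: the contiguous-interval partition with $\sS_{\sigma(i)}=[q_{i-1},q_i)$. I would check that this partition really belongs to the category of $\sigma$, i.e., that its means are ordered as in \eqref{eq:order_m}. This holds because $Q_W$ is nondecreasing and the intervals are placed left to right in the order $\sigma$.

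The overall maximum is therefore the maximum over the $k!$ permutations of the objective evaluated at these interval partitions. Since $p_i m_i^2=\frac{1}{p_i}(\int_{\sS_i}Q_W)^2$, this objective is exactly \eqref{eq:thm_Em^2}. A maximizing $\sigma^*$ exists because the set of permutations is finite.

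What remains is to show that the quantizer \eqref{eq:f*_def} built from $\sS^*_{\sigma^*(i)}=[q_{i-1},q_i)$ has the form \eqref{eq:thm_f}. For $w\in\R$ we have $f^*(w)=\sigma^*(i)$ iff $F_W(w)\in[q_{i-1},q_i)$. Let $Y\deq\sigma^{*-1}(X)$. Then $Y$ has distribution $P_Y(i)=p_{\sigma^*(i)}$, and its CDF $F_Y$ takes values $q_i$ at the integers. With the generalized inverse $F_Y^{-1}(u)=\min\{i:F_Y(i)>u\}$ (or the standard left-continuous version, with the usual boundary convention), $F_Y^{-1}(u)=i$ exactly when $u\in[q_{i-1},q_i)$. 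Hence $X=\sigma^*(F_Y^{-1}(F_W(W)))$, which is \eqref{eq:thm_f}, with $F_{\sigma^{*-1}(X)}$ read as $F_Y$.

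I expect the main obstacle to be this last bookkeeping: interpreting the self-referential notation $F_{\sigma^{*-1}(X)}$ correctly and handling the quantile convention at the endpoints $q_i$ and at $u=1$. These boundary sets have probability zero, so they change neither the output distribution nor the MMSE, and I would note this explicitly. A smaller point is categories containing ties among the $m_i$. These cause no difficulty, because Lemma~\ref{lm:in-catagory} only needs the weak ordering.
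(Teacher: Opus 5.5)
Your proposal is correct and follows the paper's proof essentially step by step: you split the admissible partitions into the $k!$ categories by the order of the $m_i$, apply Lemma~\ref{lm:in-catagory} to get the contiguous-interval maximizer in each category, and maximize \eqref{eq:thm_Em^2} over permutations. You then read off the form \eqref{eq:thm_f} from the CDF of $\sigma^{*-1}(X)$, which takes the value $q_i$ at $i$, together with its generalized inverse. Your explicit handling of ties, of the existence of $\sigma^*$, and of the measure-zero quantile conventions at the endpoints $q_i$ makes precise points that the paper leaves implicit.
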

\begin{proof}
The analysis developed so far shows that the maximization on the RHS of \eqref{eq:max_S} can be achieved by first dividing the set of admissible ordered partitions of $[0,1]$ into $k!$ categories according to the order of $m_1, \ldots, m_k$, and then optimizing within each category.
It follows from Lemma~\ref{lm:in-catagory} that in a category represented by the permutation $\sigma$, the optimal ordered partition takes the form
$\sS_{\sigma(i)} = [q_{i-1}, q_i)$ with $q_i = q_{i-1} + p_{\sigma(i)}$, $i=1,\ldots,k$,
and $q_0 = 0$. 
The globally optimal ordered partition is then determined by the permutation $\sigma^*$ that maximizes \eqref{eq:thm_Em^2}.
It then follows from Lemma~\ref{lm:convert_f_S}
and \eqref{eq:max_E[m^2]} that the quantizer $f^*$ determined by $\sigma^*$ achieves the minimum $\text{mmse}(W|f^*(W))$ while satisfying $f^*(W)\sim P_X$.

It remains to show that this quantizer takes the particular form in \eqref{eq:thm_f}.
To see this, recall that the quantizer determined by $\sigma^*$ is given by \eqref{eq:f*_def} as
\begin{align}
    f^*(w) = \sigma^*(i) \quad\textnormal{if $w\in Q_W(\sS_{\sigma^*(i)})$} , \quad w\in\R . \label{eq:thm_pf_f*}
\end{align}
We have
\begin{align}
w\in Q_W(\sS_{\sigma^*(i)}) 
\Leftrightarrow F_W(w) \in 
\sS_{\sigma^*(i)} ,
\end{align}
$F_W(W)$ has the same distribution as $U$, and the $k$-tuple 
\begin{align}
& \quad 
(P_U(\sS_{\sigma^*(1)}), \ldots, P_U(\sS_{\sigma^*(k)})) 
 \nonumber \\ 
&= (P_{X}(\sigma^*(1)), \ldots, P_{X}(\sigma^*(k))) \\
&= (P_{\sigma^{*-1}(X)}(1), \ldots, P_{\sigma^{*-1}(X)}(k))
\end{align}
is the probability mass function of the random variable $\sigma^{*-1}(X)$, whose generalized inverse CDF maps the contiguous subintervals
$(\sS_{\sigma^*(1)}, \ldots, \sS_{\sigma^*(k)})$ of $[0,1]$ to the indices $(1,\ldots,k)$.
With these observations, we know the function in \eqref{eq:thm_pf_f*} is equivalent to
\begin{align}
f^*(w) = \sigma^*\big(F_{\sigma^{*-1}(X)}^{-1}(F_W(w))\big), \quad w\in\R .
\end{align}
This completes the proof of the theorem.
\end{proof}

\section{Examples and applications}
\subsection{Gaussian input}
When $W$ is Gaussian with mean $\mu$ and standard deviation $s$, 
$   Q_W(u) = \mu + s \Phi^{-1}(u)$
and
$    \int_a^b Q_W(u) \rd u = \mu (b-a) + s(\varphi(\Phi^{-1}(a)) - \varphi(\Phi^{-1}(b))) $
where $\varphi$ and $\Phi$ are the PDF and CDF of the standard Gaussian distribution respectively.
Consider the standard Gaussian case with $P_X = (0.1, 0.2, 0.3, 0.4)$.
Numerical computation of \eqref{eq:thm_Em^2} for $4!=24$ candidate contiguous intervals as quantization regions shows that an optimal permutation $\sigma^*$ is $(1,3,4,2)$, corresponding to the quantization thresholds $(-1.2816, -0.2533, 0.8416)$ and the minimum MMSE $0.1236$.
The optimal quantization regions are shown in Fig.~\ref{fig:gaussian} and the details of $\sigma^{*-1}$ and $P_{\sigma^{*-1}(X)}$ are shown in the table above it. Due to the symmetry of the distribution, $(2,4,3,1)$ is another optimal permutation. It is seen that the quantization regions for larger probability masses are put closer to the mean.

\subsection{Uniform input}
When $W$ is uniformly distributed over an interval $[a,b]$, $ Q_W(u) = (b-a)u+a .   $
In this case, regardless of the output distribution $P_X=(p_1,\ldots,p_k)$, candidate contiguous intervals corresponding to all permutations result in the same MMSE 
\begin{align}
    \frac{(b-a)^2}{12}\sum_{i=1}^k p_i^3 \ge \frac{(b-a)^2}{12k^2} . \label{eq:unif_mmse}
\end{align}
This is because $\int_q^{q+p} Q_W(u)\rd u$ in this case only depends on the length $p$ of the interval but not its location.
If we further optimize over all output distributions over $\{1,\ldots,k\}$ to minimize the MMSE, the RHS of \eqref{eq:unif_mmse} is achieved when $P_X$ is uniform over $\{1,\ldots,k\}$.
This recovers a classic result in quantization theory, stating that each additional bit of a uniform quantizer for a uniform distribution reduces the mean squared quantization error by a factor of $4$, or $6.02$ dB \cite{quant_1998}.
This example also illustrates a new approach to distortion-minimizing quantizer design: for each output distribution, find the optimal quantizer under that constraint, then optimize over the output distribution.

\subsection{Uniform output: entropy and information maximizing quantization}
When the output distribution $P_X$ is uniform over $\{1,\ldots,k\}$, again the permutation does not matter, as contiguous intervals $(\sS_{\sigma(1)}, \ldots, \sS_{\sigma(k)})$ corresponding to all permutations are the same.
Moreover, in this case the output entropy achieves the maximum, $\log k$, and the mutual information between the input $W$ and output $X$ is also maximized, as
\begin{align}
    I(W;X) = H(X) - H(X|W) = H(X)
\end{align}
where the second equality is due to the fact that $X$ is determined by $W$.
In other words, when the output distribution is uniform, the quantizer derived in Theorem~\ref{th:main} minimizes ${\rm mmse}(W|X)$ among all quantizers that maximize $I(W;X)$.
This yields a principled design of quantizers that simultaneously achieves distortion minimization and information maximization.

\subsection{Explicit control of output entropy}
When quantizing an i.i.d.\ source sequence, by selecting a low-entropy \(P_X\), one can intentionally produce a low entropy rate output sequence that can be further compressed, e.g.\ by arithmetic or Huffman coding, trading off a controlled increase in distortion for a lower expected bit rate \cite{Gallager_2008}. 
On the other hand, choosing \(P_X\) to be uniform to maximize the output entropy given $k$ is also viable, as it makes each output bit count without further compression.

\begin{table}[t]
\centering
\begin{tabular}{r|cccc}
\hline
$x$\,\,\, & 1 & 2 & 3 & 4 \\
\hline
$P_X(x)$ & 0.1 & 0.2 & 0.3 & 0.4 \\
$\sigma^*(x)$ & 1 & 3 & 4 & 2 \\
$\sigma^{*-1}(x)$ & 1 & 4 & 2 & 3 \\
$P_{\sigma^{*-1}(X)}(x)$ & 0.1 & 0.3 & 0.4 & 0.2 \\
\hline
\end{tabular}
\end{table}
\begin{figure}[t]
    \centering
\includegraphics[width=0.4\textwidth]{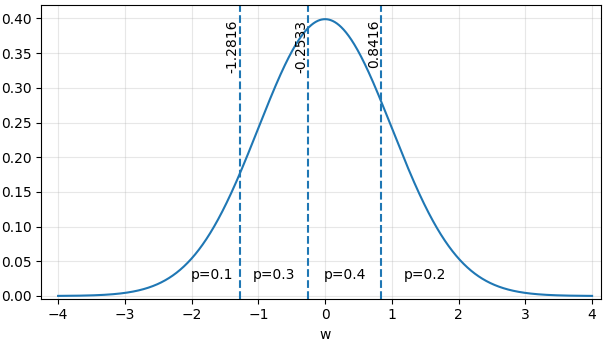}
    \caption{Optimal quantization regions for standard Gaussian with output distribution $(0.1, 0.2, 0.3, 0.4)$.}
    \label{fig:gaussian}
\end{figure}

\subsection{Channel matching}
In communication settings where the quantizer output is sent over a physical channel with stringent input distribution requirements, shaping \(P_X\) provides a mechanism for {channel matching}. 
For example, a channel might be peak power limited, requiring an input distribution that avoids high amplitude symbols, or it might be sequence constrained, e.g.\ those in magnetic recording or optical fibers, where certain symbols must occur with specific frequencies to maintain synchronization or prevent nonlinear effects \cite{MarcusRothSiegel}.
Channel matching ensures that the statistical properties of the quantization output align with the physical or regulatory constraints of the transmission channel.


\subsection{Clustering and data anonymization}
The optimal quantizer derived here may provide more principled solutions to 1D clustering problems \cite{Ckmeans_1ddp,Nielsen_1d_cluster} by assuming a continuous distribution of the data.
Finally, in data anonymization and constrained disclosure settings, output distribution shaping can be used to regulate the statistical property of released symbols while retaining as much estimation accuracy as possible \cite{REBOLLOMONEDERO2013185,dist_prsv_k_anonym}.

\section{Conclusion and possible extensions}
We derived the optimal quantizer of a real-valued random variable $W$ such that the output distribution $P_X$ is as specified, while the MMSE of estimating $W$ from $X$ is minimized.
Important extensions of this work include 
\begin{itemize}[leftmargin=*]
\item Joint source-channel coding

In this scenario, the quantization output $X$ is to be sent over a noisy channel $P_{Y|X}$, resulting in a noise-corrupted source representation $Y$. 
The methods developed in this work can be used to design a quantizer with specified output distribution $P_X$ while minimizing the MMSE of estimating $W$ from the channel output $Y$.
    
\item Communication with feedback 

In a sequential feedback communication scenario, at the $t$th step, the source $W$ together with the previous channel outputs $(Y_1, \ldots, Y_{t-1})$ are encoded to a new channel input $X_t$, and the new channel output $Y_t$ is subsequently fed back to the encoder. Methods developed in this work can be used to design encoders in this feedback communication scenario such that the MMSE of estimating $W$ from $(Y_1, \ldots, Y_t)$ is minimized while the mutual information $I(W; Y_1, \ldots, Y_t)$ is maximized for each $t$ \cite{feedback_maxI_minD_xu}.
\end{itemize}


{
    \small
    \bibliography{main}
}

\end{document}